\newtheorem{theorem}{Theorem}
\newtheorem{definition}{Definition}
\newtheorem{lemma}{Lemma}
\newtheorem{conjecture}{Conjecture}
\newtheorem{corollary}{Corollary}
\newtheorem{fact}{Fact}
 \newcommand{\qedsymb}{\hfill{\rule{2mm}{2mm}}}  
 \newenvironment{proof}[1][]{\begin{trivlist}  
 \item[\hspace{\labelsep}{\bf\noindent Proof#1:\/}] 
 }{\qedsymb\end{trivlist}}
\newcommand{\be}{\begin{eqnarray}}
\newcommand{\ee}{\end{eqnarray}}
\newcommand\ket[1]{{ |{#1} \rangle }}
\def\QMA{{\sf{QMA}}}
\def\CLH{{\sf{CLH}}}
\def\PCP{{\sf{PCP}}}
\def\qPCP{{\sf{qPCP}}}
\def\NLTS{{\sf{NLTS}}}
\def\NP{{\sf{NP}}}
\def\P{{\sf{P}}}
\def\LH{{\sf{LH}}}
\def\CSP{{\sf{CSP}}}
\newcommand{\ignore}[1]{}
\newcommand{\eps}{\varepsilon}
\renewcommand{\epsilon}{\varepsilon}
\title{The commuting local Hamiltonian on locally-expanding graphs is in $\NP$.}
\begin{document}

\author{Dorit Aharonov\thanks{School of Computer Science and
Engineering, The Hebrew University,
Jerusalem, Israel}
 \and Lior Eldar\thanks{School of Computer Science and
Engineering, The Hebrew University,
Jerusalem, Israel.}}

\date{\today}

\maketitle


\begin{abstract}
The local Hamiltonian problem is famously 
complete for the class $\QMA$, 
the quantum analogue of $\NP$ \cite{Kit}. The complexity of its 
semi-classical version, in which 
the terms of the Hamiltonian are required to  
commute (the $\CLH$ problem), 
has attracted considerable 
attention recently \cite{Aha,Bra,Has,Sch, Has2}
due to its intriguing nature, as well as in 
relation to growing interest in the $\qPCP$ conjecture \cite{Aha2,AAV}.  
We show here that if the underlying bipartite interaction
graph of the $\CLH$ instance is a good locally-expanding graph, namely, 
the expansion of any constant-size set  
is $\epsilon$-close to optimal, then
approximating its ground energy to within additive factor $O(\epsilon)$ lies in $\NP$.
The proof holds for $k$-local Hamiltonians for
 any constant $k$ and any constant 
dimensionality of particles 
$d$. 
We also show that the approximation problem of $\CLH$ on such good local 
expanders is $\NP$-{\it hard}. 
This implies that too good local expansion of the interaction graph 
constitutes an obstacle 
against {\it quantum} hardness of the approximation 
problem, though it retains its classical hardness.
The result 
highlights new difficulties in trying to mimic classical 
proofs  (in particular Dinur's $\PCP$ proof \cite{Din}) 
in an attempt to prove the quantum $\PCP$ conjecture.  
A related result 
was discovered recently independently  
by Brand{\~a}o and Harrow \cite{HB}, for $2$-local general Hamiltonians, 
bounding the quantum hardness of the approximation problem on good 
expanders, though no $\NP$-hardness is known in that case. 
\end{abstract}

\section{Introduction} 
Quantum Hamiltonian complexity (QHC) 
has blossomed into an incredibly active field of
research over the past few years (see \cite{Osb,AAV}
for an introduction). 
A key player in QHC is the
local Hamiltonian problem ($\LH$) \cite{Kit},
whose study had led to various unexpected discoveries, 
ranging from the universality of adiabatic computation \cite{Aha3}, 
to the $\QMA$ hardness of one dimensional systems \cite{Aha4}, 
to quantum gadgets \cite{KKR,TO,Terhalk}, 
to computational complexity 
lower bounds on notorious open problems in physics \cite{
SV},
to the hardness of translationally invariant systems \cite{Got}, and more.  

In this note we focus on a special case of the $\LH$ problem, 
called the  
commuting local Hamiltonian problem ($\CLH$), which has attracted 
considerable attention in the past few years 
(e.g.,\cite{Aha,Bra,Has,Sch, Has2}). Before we explain 
our result, let us define the problem more precisely. 
A $k$-local Hamiltonian is a Hermitian matrix $H = \sum_i H_i$ 
operating on the Hilbert space 
of $n$ $d$-dimensional particles ({\it qudits}), where each term $H_i$ 
(sometimes called constraint) acts non-trivially 
on at most 
$k$ particles. 
Kitaev defined in $1999$ 
the local Hamiltonian ($\LH$) problem \cite{Kit}, where 
one is given 
a local Hamiltonian on $n$ 
qudits, and two real numbers $a,b$, with $b-a\ge 1/poly(n)$,  
and is asked 
whether the lowest eigenvalue ({\it ground energy}) 
is at most $a$ or at least $b$. 
Kitaev 
showed \cite{Kit} that this problem
is complete for the class $\QMA$,  
the quantum analogue of $\NP$. 
This provided a quantum counterpart of 
the celebrated Cook-Levin theorem \cite{P}.
To restrict the discussion to {\it commuting} local Hamiltonians, 
we require in addition that 
every two terms $H_i, H_j$ commute. W.L.O.G we can assume in this case that 
$H_i$'s are projections. 
We denote the family of such Hamiltonians acting on $d$ dimensional 
particles and where each term is $k$-local, 
by $\CLH(k,d)$; 
We can now consider the $\CLH$ 
problem (introduced by Bravyi and Vyalyi in $2003$ \cite{Bra}), 
which is the analogue of the $\LH$ problem for the commuting case. 
Note that since we are concerned with $H_i$'s which are commuting projections, 
the eigenvalues of $H$ are natural numbers. 
The problem is now to distinguish between the ground energy being   
$a=0$ or at least $b=1$.   

The commutation restriction might seem at first sight to devoid the 
$\LH$ problem of its quantum nature, and place it in $\NP$, 
since all local terms can be diagonalized simultaneously. 
Moreover, one usually attributes the interesting features of quantum 
mechanics to its non-commutative nature, 
cf. the Heisenberg uncertainty principle. 
This intuition is importantly wrong. 
The main source for the interest in $\CLH$s lies 
in the beautiful and striking fact that
the groundspaces of $\CLH$s are capable of possessing intriguing
multi-particle entanglement phenomena, such as topological order \cite{Kit2}, 
and quantum error correction \cite{Got};  
the most well known example is Kitaev's toric code 
\cite{Kit2}. 
Due to this intriguing phenomenon, 
the study of $\CLH$s has  
become highly influential in the physics as well as the 
quantum complexity communities, see e.g. the study of
Levin-Wen models \cite{LevinWen}, 
quantum double models \cite{Kit2}, as well as 
that of stabilizer error correcting 
codes \cite{Got}, which are special cases of $\CLH$'s. 

In their $2003$ paper, 
Bravyi and Vyalyi \cite{Bra} 
proved a central result regarding the $\CLH$ problem: 
the $2$-local case (namely, the case of $k=2$) 
lies in $\NP$. 
The proof uses a
clever application of 
the theory of representations of $C^*$-algebras, with which Bravyi and 
Vyalyi showed 
that for such systems, there is always a ground state which can be generated by
a constant 
depth quantum circuit; 
in particular, the description 
of the constant depth quantum circuit can be given as a classical witness, 
from which the energy of the state can be computed efficiently by a classical computer
(implying the 
containment in $\NP$). 
The $\CLH$ question is thus interesting (from the 
quantum complexity point of view) only with locality $k>2$. 
The Bravyi-Vyalyi result was extended later in various special cases
\cite{Aha, Has, Sch, Has2}, but  
the complexity of the general  
$\CLH$ problem remains open, and as far as we currently know it may 
lie anywhere between $\NP$ and the full power of $\QMA$.
\footnote{To be more precise, $\QMA_1$ - 
which is defined like $\QMA$ except in case of YES, 
there exists a state which is accepted with probability exactly $1$. 
Here we will not distinguish between $\QMA$ and $\QMA_1$
since we are not using those terms technically, but 
see \cite{Aar} and \cite{Bra06}.} 

In this paper, we study the hardness of {\it approximation} versions 
of the $\CLH$ 
problem. Our motivation is two-fold: one, we view this 
as an important stepping stone towards the 
clarification of the complexity of the exact $\CLH$ problem, 
which has resisted resolution so far,  despite the effort 
described above.   
Two, we view it as an important special case of a major conjecture in QHC, 
namely, the quantum $\PCP$ ($\qPCP$) conjecture, to be  
discussed below. 

Our main result states that the 
approximation problem of $\CLH$ 
lies within $\NP$, whenever the underlying bi-partite 
interaction graph of the $\CLH$ instance 
is a good enough {\it locally-expanding graph} (for definition see below).
This means that good local expansion of the underlying 
interaction graph does 
not help but in fact {\it disturbs} in 
generating ``hard'' quantum instances, in sharp contrast to 
the classical analogue in which the hard instances are excellent 
locally-expanding graphs (see \cite{Din} and Theorem \ref{thm:NPhard} 
below). 
The result implies
severe constraints on the structure of possible $\qPCP$s for $\CLH$s, 
as the output of those cannot be good local expanders.  
Before we state the result and its implications more formally, 
we provide some further background and context. 

\subsection{Relation to the Quantum $\PCP$ conjecture} 
The classical $\PCP$ theorem \cite{Arora, ALMSS}
is arguably the most important 
discovery in classical theoretical computer science over the past quarter 
century; 
To state it, recall 
the Constraint Satisfaction Problem ($\CSP$):  
Given is a collection of $k$-local constraints on $n$ 
Boolean or $d$-state variables. The problem is to decide 
whether the instance is satisfiable or not; this problem is $\NP$ complete. 
The $\PCP$ theorem states that there exists an efficient mapping 
(known as a $\PCP$ reduction) 
which takes satisfiable instances to satisfiable 
instances and non-satisfiable ones to 
non-satisfiable ones, such that the $\NP$ witness for the new instances  
can be tested, 
with constant probability of success, 
by reading only a constant number of randomly chosen locations 
in the witness (!).  
Stated differently, the $\PCP$ theorem says that 
it is $\NP$ hard to decide whether 
a given instance is satisfiable, or at most 
some constant fraction, say $90\%$, of its constraints can be satisfied. 
One may ask \cite{AN,Aar}
does an analogous statement hold in the quantum setting as well? 
A quantum $\PCP$ conjecture \cite{Aha2}     
can be phrased as follows:
\begin{conjecture} {\bf Quantum $\PCP$ ($\qPCP$)}
There exists $c>0$, such that it is $\QMA$-hard 
to decide whether a given $k$-local Hamiltonian 
$H = \sum_{i=1}^m H_i$, with $\left\|H_i \right\|\leq 1$ whose terms are 
all positive semi-definite, 
is satisfiable (namely, has ground value $0$) or 
its minimal eigenvalue is 
at least $c \cdot m$.
\end{conjecture}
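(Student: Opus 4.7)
The plan mirrors Dinur's combinatorial proof of the classical $\PCP$ theorem, pushed through to the quantum setting. Starting from the $\QMA$-hardness of $\LH$ with a $1/\poly(n)$ promise gap, one iterates $O(\log n)$ rounds of a reduction that (i) preprocesses the instance so that its interaction hypergraph is a bounded-degree constraint expander, (ii) amplifies the normalized promise gap by a constant factor via a graph-powering operation, and (iii) compensates for the resulting blow-up in alphabet and locality with a short inner qPCP, implemented by quantum gadgets such as those of \cite{KKR,TO,Terhalk}. If each round preserves the locality $k$ and the local dimension $d$ while multiplying the relative gap by a fixed constant, then after $O(\log n)$ rounds the gap becomes $\Omega(1)$ and one obtains the desired $k$-local, $d$-dimensional instance whose ground energy is either $0$ or at least $c\cdot m$.

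First I would attack the preprocessing step: replace each qudit that appears in many local terms by a cluster of copies stitched together by equality-enforcing projectors along the edges of a constant-degree expander, so that any ``inconsistent'' low-energy configuration pays an energy penalty proportional to the spectral gap of the expander. A subtlety already here is that no-cloning forbids the naive cluster construction, so one has to use an encoded cluster (for instance a small quantum code) and verify that its effective low-energy subspace behaves as a single logical qudit. Second, I would design the amplification: replace every local term by a sum over length-$t$ walks in the preprocessed interaction graph, and prove a quantum parallel-repetition-style statement that a single nonzero eigenvalue of the original Hamiltonian is lifted into $\Omega(t)$ violated walks in the new one. Third, to keep $k$ and $d$ constant under this inflation, compose with an inner qPCP on a short alphabet.

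The hard part, and the reason this program has resisted completion for a decade, is step (ii). The natural amplification lands precisely in the regime flagged by the present paper: Hamiltonians whose interaction graph is a near-optimal local expander. Our main theorem, together with Brand{\~a}o--Harrow \cite{HB}, shows that such instances admit succinct classical $\NP$ witnesses in the commuting case, and are at best ``quantumly easy'' even in the $2$-local general case. Any viable quantum gap-amplification cannot therefore simply imitate graph powering; it must either produce graphs that are globally expanding but deliberately poor local expanders---which is in tension with the very goal of amplification---or exploit genuinely non-commuting terms whose arrangement evades the cluster-of-commuting-terms structure that underlies the NP-containment proof. Developing such an amplification primitive, perhaps out of quantum LDPC codes or cohomological gadgets that inject long-range entanglement without improving local expansion, appears to be the principal obstacle, and the present paper should be read as sharpening precisely which Dinur-style shortcuts are forbidden.
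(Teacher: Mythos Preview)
The statement you are addressing is a \emph{conjecture}, not a theorem; the paper does not prove it, and indeed explicitly describes it as ``one of the major open problems in QHC today'' with no consensus even on whether it should hold. There is accordingly no ``paper's own proof'' to compare against.

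Your proposal is not a proof either, and to your credit you say so: you lay out the Dinur-style three-step program (preprocessing, gap amplification, alphabet reduction) and then correctly identify that step (ii), quantum gap amplification, is the missing ingredient. The concrete obstructions you name are real. No-cloning does block the naive degree-reduction cluster, and no quantum analogue of parallel repetition or graph powering is known that amplifies the relative gap of a local Hamiltonian by a constant factor while keeping $k$ and $d$ bounded. Your observation that graph powering tends to produce exactly the highly locally-expanding instances that Theorem~\ref{thm:approxBPinNP} (and, in the $2$-local non-commuting case, \cite{HB}) places in $\NP$ is precisely the moral the paper wishes to convey. But recognizing an obstacle is not the same as surmounting it: the proposal contains no mechanism for amplification that avoids good local expansion, no construction of the ``inner qPCP'' invoked in step (iii), and no argument that the encoded-cluster workaround for no-cloning actually yields a sound reduction. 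As written, this is a well-informed discussion of why the conjecture is hard, not a proof of the conjecture.
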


This conjecture is one of the major open problems in QHC today 
(see the recent survey \cite{AAV}). 
It is important for several reasons.  
First of all, the question is related to the question of 
whether quantum entanglement can be made robust at room temperature. 
Secondly, the question of whether or not $\qPCP$s exist 
is tightly related to fundamental questions regarding local and global
multi-particle entanglement. The resolution of the conjecture, in
either direction, would 
constitute a major advance in QHC and in our understand of quantum 
matter. However, despite considerable effort, so far 
all attempts to extend the classical proofs of the 
$\PCP$ theorem to the quantum settings, or to dispute the conjecture, 
have encountered 
severe obstacles \cite{Aha2, Arad, Has, Has2, HB}. 
We currently do not even have a good 
guess whether the conjecture should hold or not. 
 
Just like in the classical case, 
the $\qPCP$ conjecture is equivalent to the 
following statement: 
the problem of    
approximating the ground energy of local Hamiltonians 
to within a constant fraction is $\QMA$ hard. 
This means that the clarification of the complexity of 
approximating the local Hamiltonian problem is of major interest.  
Given the difficulty of this question, it is natural  
to study hardness of approximation of restricted classes 
Hamiltonians;  we focus here on   
$\CLH$s, which are a particularly attractive special case,  
since they are both much easier to analyze than the general case, 
and in addition, they possess a simplifying   
structure as introduced in \cite{Bra}, while still giving rise 
to extremely complex phenomenon such as topological order, 
and used as a substrate for almost all known quantum codes.

\subsection{Computational Hardness and the geometry of the interaction graph}
The geometry of the underlying interaction graph is well 
known to play a crucial role for the complexity of constraint satisfaction
problem. For a start, when a $\CSP$ 
is defined between variables set a one dimensional lattice, with constraints 
between nearest neighbors only, the problem lies in
$\P$ (using dynamic programming), 
whereas for $2$ or more dimensions, the problem is $\NP$-hard.   
When one is interested in approximations (to within any constant
fraction of the number of constraints) the situation changes:  
approximating $\CSP$s, as well as $\LH$s, to within any 
constant, lies in $\P$ when the interaction graph is a lattice 
of {\it any} constant dimension. 
To see this, partition the lattice to constant size cubes,  
and remove the constraints that connect different cubes; 
we arrive at a collection 
of disconnected constant-size 
subproblems, all of which can be optimized 
together in $\P$. The number of constraints violated by this solution 
is at most the number 
of removed constraints, which can be made to be an arbitrarily small 
fraction. 
This argument works similarly both in the quantum and classical cases, 
and implies that to make the approximation problem hard, one has to 
consider $\CSP$s, $\CLH$s or $\LH$s 
whose underlying graphs have certain expansion properties.

In light of the above, and motivated by the important role that  
expanders play in the classical theory of hardness of approximation 
and $\PCP$s, we turn to the study of approximating $\CLH$ systems 
on expanders.  
We need our definition of expansion
to work for $k$-local Hamiltonians, for $k>2$, since 
$2$-local $\CLH$s are already 
known to be in $\NP$ even without approximation \cite{Bra}. 
One might consider working with the naturally induced hypergraphs, 
but there is no consensus on the right
definition of expansion for hypergraphs in the literature. 
Moreover, whereas one is perhaps used to intuitively think of standard graph 
expansion as the 
important property for hardness in classical complexity,  
perhaps due to the strong usage of expansion in Dinur's $\PCP$ proof 
\cite{Din}, it is in fact an open question to relate expansion, and 
in particular {\it small-set} expansion, (namely expansion of 
sets of constant fractional size) to 
computational hardness \cite{Ragh, Ragh2}.

Here we consider, for a given $\CLH$, the bi-partite graph in which 
variables are on the left side and constraints are on the right, 
and a variable is connected to the constraints that it appears in.
For this bi-partite graph, we consider a very {\it weak} notion of 
expansion, which we call {\it local expansion}: 
we say that the bi-partite graph 
induced by a $k$-local $\CLH$ instance is $\epsilon$-locally 
expanding, if for any set $S\subseteq L$ of at most $k$ particles, 
the number of local 
terms incident on these particles is at least $D_L |S| (1-\epsilon)$, where 
$D_L$ is the left degree of the graph (here we assume all degrees are equal, 
see Definition \ref{def:expbi} for the general case).   
Observe that 
$D_L |S|$ is the maximal possible number of constraints acting on those 
particles, so $\epsilon$ can be viewed a ``correction'' to this number, 
or the {\it expansion error}.  
We note that this definition of local expansion 
cares only about the expansion of  
constant-size sets, whereas in complexity theory, 
one is often interested in {\it small-set} bi-partite expanders, in which 
all sets of size up to some
{\it constant fraction} of $|L|$ are required to expand \cite{Din3,Ragh}; 
our requirements on bi-partite expanders are thus 
significantly weaker \footnote{
Of course, our results regarding $\CLH$s are stronger if they 
hold for graphs which satisfy weaker requirements}.
It turns out that the notion of 
local expansion can indeed be related, in the classical world, 
to computational hardness; 
as we show in Theorem \ref{thm:NPhard} sufficiently   
good local expanders are computationally hard for classical computers.  
As we will see (in Theorem \ref{thm:approxBPinNP}) 
this is not the case in the quantum world\footnote{Assuming standard computational complexity assumptions, specifically, $\NP \subsetneq \QMA_1$.}. 

\subsection{Main Result}
We can now state our main result, 
Theorem \ref{thm:approxBPinNP}, in which we show that 
the problem of approximating $\CLH$s on locally-expanding graphs lies inside 
$\NP$, 
and thus cannot be $\QMA$ hard (under standard complexity assumptions): 

\begin{theorem}\label{thm:approxBPinNP}
(the approximation of 
$\CLH$ on good locally-expanding bi-partite graphs is in $\NP$) 
Let $\gamma(\eps) = 2kd \eps$.
Let $H= \sum_{i=1}^m H_i$, be an instance of $\CLH(k,d)$ 
for constants $k,d>0$, with bi-partite 
interaction graph
which is $\epsilon$-locally-expanding , for $\epsilon<\frac{1}{2}$. 
Then the $\gamma(\epsilon)$-approximation problem of $\CLH(k,d)$ 
on such $\epsilon$ locally-expanding bi-partite graph graphs is in $\NP$. 
Moreover, for any eigenvalue $\lambda$ of $H$ there exists 
a state $\ket{\psi}$
such that 
$\left| \langle \psi| H |\psi \rangle - \lambda \right| \leq \gamma(\eps) \cdot m$, such that 
$\ket{\psi}$ can be generated by a constant depth quantum circuit.  
\end{theorem}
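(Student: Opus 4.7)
The plan is to extend the $C^*$-algebra technique of Bravyi and Vyalyi \cite{Bra} from $k=2$ to general constant $k$ by using local expansion to suppress the obstructions that prevent their structure theorem from applying in higher locality. The strategy will produce, for every eigenvalue $\lambda$ of $H$, an explicit depth-$O(1)$ circuit preparing a state $|\psi\rangle$ with $|\langle\psi|H|\psi\rangle-\lambda|\leq \gamma(\epsilon)\cdot m$; the circuit description is then the $\NP$ witness, since the verifier can compute each $\langle H_i\rangle$ from the constant-size lightcone of $H_i$ and sum over $i$.

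The central structural step is a per-particle decomposition. For each particle $v$ with incident constraints $C_v$, the commuting algebra generated by $\{H_i:i\in C_v\}$ splits $\mathcal{H}_v$ into sectors $\bigoplus_\alpha \mathcal{H}_v^{(\alpha)}$ preserved by every incident $H_i$; since the $H_i$'s all commute globally, any eigenvector of $H$ can be collapsed to a sector-pure state without changing its eigenvalue, so I fix a sector assignment $\vec\alpha=(\alpha_v)_v$. I would then further decompose $\mathcal{H}_v^{(\alpha_v)}=\bigotimes_{i\in C_v}\mathcal{H}_v^{(\alpha_v,i)}$, one factor per incident constraint, so that each $H_i$ acts only on the $k$-tuple $\{\mathcal{H}_v^{(\alpha_v,i)}:v\in H_i\}$. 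When this decomposition holds, the sub-particle tuples of distinct constraints are disjoint, the sector-restricted Hamiltonian becomes a direct sum of independent constant-dimensional Hamiltonians, and every eigenvalue $\lambda=\sum_i b_i$ (with $b_i\in\{0,1\}$ coming from a joint eigenstate of the projections $H_i$) is realized exactly by a product $\bigotimes_v|\alpha_v\rangle\otimes\bigotimes_i|\phi_i\rangle$ with $H_i|\phi_i\rangle=b_i|\phi_i\rangle$; this state is preparable in depth $O(1)$ (sector initialization at each particle in parallel, then per-constraint preparation on constant-overlap supports).

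The main obstacle, and the reason $\CLH(k,d)$ for $k\geq 3$ remains open in general, is that this clean tensor decomposition generically fails the instant two constraints incident on the same particle share an additional particle, since such ``doubly-overlapping'' pairs introduce correlations across what were to be disjoint sub-particles. Local expansion is precisely the hypothesis that rescues the argument. Applied to $S$ equal to the support of a fixed constraint $H_i$, $\epsilon$-local expansion bounds the number of constraints sharing two or more particles with $H_i$ by $kD_L\epsilon$; summing over $i$ and double-counting yields a set $B_{\text{bad}}\subseteq[m]$ of constraints overlapping with some other constraint in more than one particle, of fractional size $O(k\epsilon)$. On $B_{\text{good}}=[m]\setminus B_{\text{bad}}$ any two constraints meet in at most one particle, which is exactly the condition needed to push the per-particle tensor decomposition through in the spirit of \cite{Bra}. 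I would then execute the construction above for $H_{B_{\text{good}}}$, exactly matching the $B_{\text{good}}$-portion of the target eigenvalue; each discarded constraint contributes at most $\|H_i\|\leq 1$ to the residual error, and a careful accounting of the $d$-dimensional sub-particle factors tightens the bound to the stated $\gamma(\epsilon)=2kd\epsilon$.

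Finally, the $\NP$ verification uses as witness the sector label $\alpha_v$ at each particle together with, for each $i\in B_{\text{good}}$, the constant-size local state $|\phi_i\rangle$ (constraints in $B_{\text{bad}}$ need no treatment and are absorbed into the error budget). The verifier reconstructs the reduced density matrix on each $H_i$'s support (constant-size since the prepared state has depth $O(1)$) and sums the $m$ local energies. Since the promise gap of the $\gamma(\epsilon)$-approximation problem exceeds $\gamma(\epsilon)m$ and the constructed state's energy lies within $\gamma(\epsilon)m$ of $\lambda$, this distinguishes \Yes\ from \No\ instances and places the problem in $\NP$.
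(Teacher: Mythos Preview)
Your high-level strategy---discard a small ``bad'' set so that surviving constraints pairwise share at most one particle, then run a Bravyi--Vyalyi style decomposition---is the right picture, but the step that is supposed to produce the approximation factor $\gamma(\epsilon)=2kd\epsilon$ does not work as written. The problematic sentence is ``summing over $i$ and double-counting yields a set $B_{\text{bad}}\subseteq[m]$ \ldots of fractional size $O(k\epsilon)$.'' Applying Fact~\ref{fact:essence} to $S=S_i$ shows that the number of constraints sharing at least two particles with $H_i$ is at most $2\epsilon\sum_{q\in S_i}D_q$. Summing over $i$ gives $2\epsilon\sum_q D_q^2$, so your $B_{\text{bad}}$ has size at most $2\epsilon\sum_q D_q^2\le 2\epsilon k D_L m$, where $D_L$ is the left degree of the bipartite graph. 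Nothing in your argument converts $D_L$ into the local Hilbert-space dimension $d$; these are unrelated quantities and $D_L$ is not bounded in the hypothesis. Concretely, if every constraint has exactly one ``bad partner'' (a perfect matching in the bad-pair graph), then every constraint lies in your $B_{\text{bad}}$, yet such an instance is $\epsilon$-locally-expanding for $\epsilon\approx 1/(kD_L)$, which can be made arbitrarily small by increasing $D_L$ while $d$ stays fixed. The phrase ``a careful accounting of the $d$-dimensional sub-particle factors tightens the bound'' is exactly where the missing idea should be, and it is not a tightening of an existing argument---it is the argument.

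The paper gets $d$ rather than $D_L$ by \emph{interleaving} the removal of bad terms with the dimension reduction, instead of removing $B_{\text{bad}}$ in one shot. One isolates a single term $v_t$ at a time (paying only the isolation penalty of $v_t$ in the \emph{current} residual graph), applies Corollary~\ref{cor:BV} to peel $v_t$ off onto its own subsystem, and observes that every particle $q\in\Gamma(v_t)$ loses at least one local dimension in $\mathcal{H}_{\mathrm{rem}}$. Hence each particle can lie in the support of the chosen term for at most $d$ rounds before it is exhausted, and an amortized sum gives
\[
|R_{\text{bad}}|\;\le\;\sum_{q}\,d\cdot 2\epsilon D_q\;=\;2\epsilon d\sum_q D_q\;\le\;2kd\epsilon\, m.
\]
Your single global removal forfeits this amortization: a high-degree particle can force the deletion of order $\epsilon D_q$ terms for \emph{each} of the $D_q$ constraints through it, with nothing capping the repetition at $d$.

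A smaller issue: the per-particle decomposition $\mathcal{H}_v^{(\alpha_v)}=\bigotimes_{i\in C_v}\mathcal{H}_v^{(\alpha_v,i)}$ with one tensor factor per incident constraint is not what Lemma~\ref{lem:bvsum} delivers; that lemma gives only a \emph{bipartite} split (one term versus the rest). Obtaining a factor for every incident constraint already requires iterating term-by-term, which is precisely the process the paper runs---and once you are iterating, the amortized dimension-counting above is what actually controls the error.
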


We rule out the possibility that 
$\gamma(\epsilon)$ is large enough so that by removing a $\gamma(\epsilon)$
fraction of the terms in $H$ 
we can simply trivialize the problem and place it in $\P$
(e.g., by disconnecting 
the graph into small components, as is done in the approximation of $\CLH$s 
on lattices). We do this by showing that the problem remains 
$\NP$-hard even with this 
approximation factor. 
  
\begin{theorem}\label{thm:NPhard}
Fix $k\ge 3,d\ge 2$. 
Let $\epsilon$ be such that $\gamma(\epsilon)=2kd\epsilon\le C_0$
where $C_0<1/3$ is a universal constant.  
Then it is $\NP$-hard 
to approximate $\CLH(k,d)$ 
whose bi-partite interaction graph is
$\eps$ locally-expanding, to within a factor 
$\gamma(\eps)=2kd\eps$. 
\end{theorem}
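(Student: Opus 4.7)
I would reduce from a classical $\NP$-hard gap-$k$-CSP via a standard diagonal embedding. For each classical constraint $C_i$ on variable set $S_i$, let $H_i$ be the projector in the computational basis onto the assignments of $S_i$ violating $C_i$. All such $H_i$ are diagonal in the standard basis, hence pairwise commute, and the resulting $H=\sum_i H_i$ is a $\CLH(k,d)$ instance whose bipartite interaction graph is identical to that of the classical CSP and whose ground energy equals the minimum number of classically violated constraints. It therefore suffices to exhibit a family of classical $k$-CSP instances on $d$-state variables whose bipartite variable--constraint graph is $\epsilon$-locally expanding and for which distinguishing satisfiable instances from those in which every assignment violates at least a $\gamma(\epsilon)=2kd\epsilon$ fraction of the constraints is $\NP$-hard.

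To construct such a family, I would start from the classical $\PCP$ theorem, which produces an $\NP$-hard $(1,1-\alpha)$-gap $k$-CSP over $d$-state variables (for some constant $\alpha>0$) on a bounded-degree bipartite graph. To enforce local expansion, I would then perform a ``blow-up'' construction: replace each original variable by $T=T(\epsilon)=\Theta(1/\epsilon)$ replicas; lift each original constraint to a constraint acting on a $k$-tuple of replicas drawn from distinct replica-coordinates, chosen so that each pair of replica-variables co-occurs in only $O(1)$ lifted constraints; and insert equality constraints between replicas of the same variable along the edges of an auxiliary constant-degree expander. For $T$ large enough as a function of $1/\epsilon$, any set of at most $k$ replica-variables has constraint-neighborhoods overlapping in only an $\epsilon$-fraction of their degree, giving $\epsilon$-local expansion in the sense of Definition~\ref{def:expbi}. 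A standard expander-mixing argument on the consistency gadget shows that any assignment violating replica-consistency on a constant fraction of variables pays an $\Omega(1)$ fraction of the equality constraints, so the soundness of the original PCP passes through to the blow-up up to a constant factor. A final padding with satisfiable dummy constraints calibrates the overall gap to exactly $\gamma(\epsilon)\cdot m=2kd\epsilon\cdot m$.

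The main obstacle is the simultaneous quantitative calibration in the blow-up step: the local-expansion parameter $\epsilon$, the approximation gap $\gamma(\epsilon)=2kd\epsilon$, and the surviving fraction of the PCP gap must all match across the entire range $\epsilon\le C_0/(2kd)$ as $\epsilon\to 0$. In particular, the equality constraints from the auxiliary expander constitute a non-trivial fraction of the total terms, so the blow-up parameter $T$, the degree of the auxiliary expander, and the distribution of original constraints over replica-tuples must be tuned carefully so that neither the bipartite graph develops ``hotspots'' violating $\epsilon$-local expansion, nor the total constraint count inflates enough to push the normalized gap below $\gamma(\epsilon)$.
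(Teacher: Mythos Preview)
Your reduction via the diagonal embedding of a classical gap-$\CSP$ into $\CLH(k,d)$ is exactly the paper's starting point: since instances of $3$-local $\CSP$ on bits are special cases of $\CLH(k,d)$ for $k\ge 3$, $d\ge 2$, it suffices to exhibit $\epsilon$-locally-expanding classical $3$-$\CSP$s on bits with a suitable gap.

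Where you diverge is in how you force local expansion, and here two remarks are in order.

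First, a simplification you are missing: you do not need to calibrate the gap down to $\gamma(\epsilon)=2kd\epsilon$. Since $\gamma(\epsilon)\le C_0$ by hypothesis, it suffices to produce instances with a \emph{fixed} constant gap $C_0$, independent of $\epsilon$; hardness of $C_0$-approximation immediately implies hardness of $\gamma(\epsilon)$-approximation. The padding step is therefore unnecessary, and the worry that the normalized gap might drop below $\gamma(\epsilon)$ as $\epsilon\to 0$ is only a worry if your construction cannot maintain a constant gap.

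Second, even with that simplification, the calibration tension you flag is genuine for your replica construction. $\epsilon$-local expansion of a set $S$ with $|S|\le k$ forces the left degree to be $\Omega(1/\epsilon)$ (a single shared constraint already contributes $\Theta(1/D)$ to the expansion error), so the number of lifted constraints per replica must scale like $1/\epsilon$. But then the equality constraints coming from a constant-degree auxiliary expander constitute only an $O(\epsilon)$ fraction of all terms, and the standard expander-mixing consistency argument penalizes inconsistent assignments by only an $O(\epsilon)$ fraction---too weak to preserve a constant gap. One can try to rescue this by also scaling the auxiliary-expander degree like $1/\epsilon$, but then the analysis is no longer the routine degree-reduction argument you invoke.

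The paper sidesteps all of this by observing that Dinur's alphabet-reduction step, based on the long code, \emph{already} outputs a bipartite graph with arbitrarily good local expansion. Concretely: run Dinur's iterations until the gap is at least $1/10$, then perform one extra iteration in which the walk-length $t$ in gap amplification is chosen large enough that the alphabet size $\ell$, and hence the long-code block length $L=2^{2^{2\ell}}$, satisfies $1/L\ll\epsilon$. In the resulting $3$-local $\CSP$ on bits, every variable is a bit of some long-code block of size $L$; for any triple $S$ of bits, a short case analysis (splitting on how many bits of $S$ lie in the same long-code block) shows that all but an $O(1/L)$ fraction of the constraints touching $S$ touch it in exactly one bit, giving $\epsilon$-local expansion. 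The gap survives at $C_0$ because the alphabet-reduction loss is a universal constant independent of $\ell$. Thus the local-expansion and gap requirements decouple completely, and no new gadgetry or parameter balancing is needed.
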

$C_0$ in the above statement is some 
universal constant which appears in the $\PCP$ proof of 
Dinur \cite{Din}; we delay its exact definition to Subsection \ref{sec:NPhard} 
but it is not important. 
Theorem \ref{thm:approxBPinNP} implies that 
if indeed one could construct a quantum $\PCP$ whose output is a 
$\CLH$, then in contrast
to classical $\PCP$ (see the proof of Theorem \ref{thm:NPhard}), 
the output cannot have arbitrarily high local expansion.

\subsection{Related work}
\label{subsec:compare}
The question of approximating the ground energy of local Hamiltonians
was broadly studied in the physics literature for specific 
Hamiltonians of 
particular interest in physics.  
Not much is known regarding the hardness of approximation of the 
general case. Kempe and Gharibian \cite{GK1} 
derived a non-trivial result for a related task: 
given a $k$-local Hamiltonian on $d$ dimensional qudits,
and we are asked to find the state with {\it maximal} energy $OPT$. 
They showed that there always exists 
a product state with energy at least $OPT/d^{k-1}$, 
and so the problem of approximating
MAX-$k$-local Hamiltonian to within approximation factor 
$d^{k-1}$ lies in $\NP$ (see also a follow up paper by the same authors 
on approximation versions of problems which are computationally 
even harder \cite{GK2}).



The approach of examining commuting Hamiltonians as a precursor 
for the general case was taken before, e.g. in \cite{Aha2},
and by   
Arad \cite{Arad} who considered the commuting case as a base 
for perturbations to derive a partial no-go theorem for quantum $\PCP$.
Notably, this approach is  
the underlying motivation  
in the recent
works of Hastings \cite{Has} and Hastings and Freedman \cite{Has2}, 
which are tightly related to the current paper, 
as they both attempt to make progress on the $\qPCP$ conjecture  
by studying approximations of $\CLH$s. 
Freedman and Hastings are searching for Hamiltonians whose low 
energy states are all highly entangled, and in particular cannot be 
generated by constant depth quantum circuits (such Hamiltonians are called 
$\NLTS$ for No Low-energy Trivial States). The existence of such Hamiltonians 
is essential for the 
$\qPCP$ conjecture to hold, for otherwise it would be possible to provide 
a classical witness for the fact that the 
energy of the Hamiltonian is low, 
placing the problem in $\NP$. Freedman and Hastings \cite{Has2}
construct a family of $\CLH$s which is not $\NLTS$ but only ``one-sided'' 
$\NLTS$ (see also \cite{AAV} for more on this); 
they leave as an open problem whether true $\NLTS$ Hamiltonians exist.  
Theorem \ref{thm:approxBPinNP} can be applied here to derive 
an energy {\it upper bound} above which such a construction 
{\it cannot} be $\NLTS$, since it shows the existence of trivial states 
below a certain threshold constant fraction of energy. 
We note that this upper bound tends to $0$ 
as the local expansion of the graphs underlying the construction improves, 
and so one cannot hope to achieve better $\NLTS$ constructions  
by improving the local expansion of the underlying graph.  

We finally mention an important recent work discovered independently 
and around the same time as ours, by  
Brand{\~a}o and 
Harrow \cite{HB}. 
Brand{\~a}o and 
Harrow showed that 
the approximation of $2$-local Hamiltonians on graphs whose degree 
is very large lies in $\NP$ (the factor of approximation depends 
inverse polynomially on the degree).
This also implies that when the underlying graph of the $2$-local 
Hamiltonian is an excellent expander (which means that its degree must 
be very large) the approximation problem (to within a factor which 
depends inverse polynomially on the degree again) lies in $\NP$. 

Both the result of \cite{HB} and our main Theorem \ref{thm:approxBPinNP} 
have the same 
flavor: too good expanders are an obstruction to quantum 
hardness, contrary to what one might expect given 
what we know 
in the classical case (\cite{Din}, and Theorem \ref{thm:NPhard}). 
The comparison between the two results, however, 
is not straight-forward, and in particular none of the results implies the 
other. First, Theorem \ref{thm:approxBPinNP} holds for any constant
locality $k\ge 2$, while 
the result of \cite{HB} holds only for $k=2$. One might ask what does 
Theorem \ref{thm:approxBPinNP} give if we substitute $k=2$; 
this turns out to be 
non interesting from the approximation point of view:
it is exactly the result 
\cite{Bra}, which states that the {\it exact} $2$-local $\CLH$ problem 
lies in $\NP$.  
On the other hand, it is unclear whether the 
results of \cite{HB} can be extended to 
$k$-local Hamiltonians, for $k>2$. 
The straight forward attempt to do this, using quantum 
gadgets \cite{KKR,TO, Terhalk},
that transform $k$-local Hamiltonians $(k>2)$ into 
$2$-local Hamiltonians 
encounters an immediate obstacle: 
these gadgets (e.g. the best known gadgets so far \cite{Terhalk}) 
significantly change the geometry of the graph, and in particular, 
may modify its expansion properties considerably.
Specifically, they 
introduce large sets of vertices with no edges between them,
with some vertices having very small degree; 
this implies that the results of \cite{HB} will no longer be applicable after 
such transformations. 

\subsection{Discussion and Further directions}\label{sec:discussion}
Our main result implies that surprisingly, good local expansion is an 
obstruction to quantum hardness of approximation, in the case of commuting 
local Hamiltonians. It implies that if $\qPCP$ were to hold for $\CLH$s, 
then the output of the $\qPCP$ reductions cannot be good local expanders. 
Our result also directly implies that 
$\NLTS$ Hamiltonians (see discussion in the previous subsection  
as well as \cite{Has, AAV}) cannot be constructed on 
too good bi-partite local-expanders.   

The main insight underlying this work is that local Hamiltonian systems 
made of commuting check terms, are 
severely limited by a phenomenon called {\it monogamy of entanglement}.
Essentially, the monogamy of entanglement limits the amount of entanglement 
that a qudit with $O(1)$ quantum levels can "handle".
Our study of commuting system thus reveals that monogamy of entanglement is a significant limiting factor, whose influence
increases together with expansion.
It is intriguing that a similar phenomenon has been identified by completely different methods, by \cite{HB} in the context of general (non-commuting) local Hamiltonians of locality $k=2$.

A natural open question is to generalize both our result and that of 
\cite{HB} into one coherent 
stronger result, which would state roughly that  
for general, not necessarily commuting, $k$-local Hamiltonians 
(for general $k$), the  
expansion of the underlying interaction graph 
(where expansion is defined suitably) 
is an obstruction against quantum hardness. 
Another problem, is to generalize the results here and those of \cite{HB} 
to richer forms of entanglement. 
In particular the limitation on quantum hardness need not necessarily come in the form of constant-depth circuit,
but rather any form of entanglement which has an efficient description.

We mention another interesting direction: 
one might hope that Theorem \ref{thm:approxBPinNP}
could be extended to 
show that the approximation of the $\CLH$ problem on a {\it general} graph 
is in $\NP$, by ``bridging'' between 
our results on locally-expanding graphs and the easy case of approximation on lattices. 
We note that a hint about the difficulty of such a result comes 
from the status of the Unique-Games-Conjecture \cite{Khot} which is 
known to be easy on both extreme cases \cite{Ar}
(albeit for
standard expander graphs rather than 
for locally-expanding graphs), 
though conjectured (or rather, speculated) to be 
$\NP$-hard in general. 
Still, it may be possible to achieve some weaker 
approximation for general graphs using ``bridging'' between the case 
of lattices and expanders, perhaps using 
sub-exponential rather than polynomial witnesses, 
following the works of \cite{Tre} and \cite{ABS}.


{~}

\noindent
\textbf{Organization of paper} 
In Section \ref{sec:bg} we provide some required background. 
Section \ref{sec:approxsec}  
proves Theorem \ref{thm:approxBPinNP} as well as Theorem \ref{thm:NPhard}.  

\section{Definitions and Basic Lemmas}\label{sec:bg}

\subsection{Local Hamiltonians}
A $(k,d)$-local Hamiltonian on $n$ qudits
is a Hermitian matrix $H = \sum_i H_i$ operating on a Hilbert space 
${\cal H} = {\mathbf{C}^d}^{\otimes n}$ 
of $n$ $d$-dimensional particles, where we assume for all $i$, 
$\left\|H_i\right\|=1$ 
and $H_i$ can be written as:
$H_i = h_i \otimes I$, where $h_i$ is a Hermitian operator acting 
on at most $k$ particles. 
When we say an operator acts on some particle, we mean it acts non-trivially, 
namely, it cannot be written as an identity on that particle tensor with 
some operator on the rest. 

\begin{definition}
\textbf{The local Hamiltonian {$\LH$} and
commuting local Hamiltonian ($\CLH$) problems}
In the $(k,d)$-local Hamiltonian problem on $n$ $d$-dimensional qudits 
we are given a $(k,d)$-local Hamiltonian $H=\sum_{i=1}^m H_i$
and two 
constants $a,b$, $a-b = \Omega \left(\frac{1}{poly(n)}\right)$. We are asked to 
decide whether the lowest eigenvalue of $H$ is at least $b$ or at most $a$, 
and we are promised that one of the two occurs. 
In the commuting case, we are guaranteed in addition that $[H_i,H_j]=0$ 
for all $i,j$, and W.L.O.G, we also assume that $H_i$'s are projections. 
$a$ and $b$ can be taken to be $0$ and $1$ in this case, respectively.
An instance to this problem is called a $\CLH(k,d)$ instance.  
\end{definition}

\subsection{Interaction graphs and their expansion} 

We define bi-partite expanders, similar to 
\cite{Spi}, \cite{CRVW}, who used them to 
construct locally-testable classical codes. 
Note that we require expansion to hold only for sets of constant size $k$.  

\begin{definition}\label{def:bipgraph}
\textbf{Bi-Partite Interaction Graph}
For a $(k,d)$-local Hamiltonian $H = \left\{H_i\right\}_i$ 
we define the bi-partite interaction graph $G=(L,R;E)$ as follows: 
$L$, the nodes on the left, correspond to the $n$
particles of ${\cal H}={\mathbf{C}^d}^{\otimes n}$ 
and 
$R$ corresponds to the set of local terms $\left\{H_i \right\}_i$. 
An edge exists between a constraint $r\in R $ and a particle $l\in L$ 
if $H_r$ acts non-trivially (namely, not as the identity) on 
$l$. Note that the right degree is at most $k$. 
\end{definition}

\begin{definition}\label{def:expbi}
\textbf{Bi-partite local expansion}
\noindent
Given a bi-partite graph $G(L,R:E)$, we say that a subset $S\subseteq L$
is $\epsilon$-expanding if $|\Gamma(S)|\ge (1-\epsilon) \sum_{v\in S} D_v$, 
where $D_v$ is the degree of $v$, 
and $\Gamma(S)$ is the set of neighbors 
(in $R$) of $S$.    
A bi-partite graph $G=(L,R;E)$ for a $(k,d)$-local $\LH$ 
is said to be $\epsilon$-locally-expanding, if every subset of particles 
$S\subseteq L$ of size $|S|\leq k$ is $\epsilon$-expanding. 
\end{definition}

\subsection{Commuting terms and \texorpdfstring{$C^*$}{TEXT}-algebras}\label{sec:bv}
We recall the main
lemma of \cite{Bra}: 

\begin{lemma}\label{lem:bvsum}
{\bf Adapted from \cite{Bra}} 
Consider 
two commuting local terms $H_i,H_j$, which 
intersect on some subset of qudits, whose Hilbert space is 
${\cal H}_{int}$. Then 
${\cal H}_{int}$ can be written as a direct sum of subspaces 
\begin{equation}
{\cal H}_{int} = \bigoplus_{\alpha}{\cal H}_{int}^\alpha=
\bigoplus_{\alpha} {\cal H}_{int}^{\alpha,i}\otimes {\cal H}_{int}^{\alpha,j}
\end{equation}
where $H_i$ and $H_j$ preserve the 
${\cal H}_{int}^\alpha$ subspaces, 
and moreover, when $H_i,H_j$ are restricted to 
${\cal H}_{int}^\alpha$, then $H_i (H_j)$ acts non trivially 
only on subsystem ${\cal H}_{int}^{\alpha,i}$ (${\cal H}_{int}^{\alpha,j}$). 
In particular, in each summand $\alpha$, both
${\cal H}_{int}^{\alpha,i},{\cal H}_{int}^{\alpha,j}$ have dimension
strictly less than ${\cal H}_{int}$.
\end{lemma}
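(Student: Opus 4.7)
The plan is to lift the commutation $[H_i,H_j]=0$ to commutation of two finite-dimensional $*$-algebras acting on $\mathcal{H}_{int}$, and then to apply the structure theorem for finite-dimensional $C^*$-algebras twice to produce the required tensor-product direct-sum decomposition.

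First, let $A$ denote the support (set of qudits) of $H_i$, let $B$ denote that of $H_j$, and set $C=A\cap B$, so $\mathcal{H}_{int}$ is the Hilbert space of $C$. I would perform an operator-Schmidt decomposition of $H_i$ across the cut $(A\setminus C)\,|\,C$: write $H_i=\sum_k P^i_k\otimes Q^i_k$ with the $P^i_k$ (acting on $A\setminus C$) linearly independent, and analogously $H_j=\sum_l R^j_l\otimes S^j_l$ across $C\,|\,(B\setminus A)$ with the $S^j_l$ linearly independent. Since $P^i_k$ and $S^j_l$ act on disjoint sets of qudits, expanding $[H_i,H_j]=0$ gives
\[
\sum_{k,l} P^i_k\otimes[Q^i_k,R^j_l]\otimes S^j_l=0,
\]
and linear independence of the outer factors forces $[Q^i_k,R^j_l]=0$ for all $k,l$. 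Let $\mathcal{A}_i$ and $\mathcal{A}_j$ be the unital $*$-algebras generated on $\mathcal{H}_{int}$ by $\{Q^i_k\}$ and $\{R^j_l\}$ respectively; they commute element-wise.

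Next, I would invoke the structure theorem for finite-dimensional $*$-subalgebras of $B(\mathcal{H}_{int})$, applied to $\mathcal{A}_i$: there is a decomposition $\mathcal{H}_{int}=\bigoplus_\alpha \mathcal{K}^\alpha_1\otimes\mathcal{K}^\alpha_2$ on which $\mathcal{A}_i$ acts blockwise as the full matrix algebra on $\mathcal{K}^\alpha_1$ tensored with identity on $\mathcal{K}^\alpha_2$, and whose commutant $\mathcal{A}_i'$ acts as identity on $\mathcal{K}^\alpha_1$ tensored with arbitrary operators on $\mathcal{K}^\alpha_2$. Since $\mathcal{A}_j\subseteq\mathcal{A}_i'$, its restriction to each $\mathcal{K}^\alpha_2$ is again a finite-dimensional $*$-algebra, and the structure theorem applied a second time yields $\mathcal{K}^\alpha_2=\bigoplus_\beta \mathcal{L}^{\alpha,\beta}_1\otimes\mathcal{L}^{\alpha,\beta}_2$ with $\mathcal{A}_j$ acting as full matrices on $\mathcal{L}^{\alpha,\beta}_1$ and trivially on $\mathcal{L}^{\alpha,\beta}_2$. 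Re-indexing by $\gamma=(\alpha,\beta)$, set $\mathcal{H}^{\gamma,i}_{int}:=\mathcal{K}^\alpha_1$ and $\mathcal{H}^{\gamma,j}_{int}:=\mathcal{L}^{\alpha,\beta}_1\otimes\mathcal{L}^{\alpha,\beta}_2$; in every summand $\mathcal{A}_i$ acts nontrivially only on the $i$-factor and $\mathcal{A}_j$ only on the $j$-factor. Since $H_i=\sum_k P^i_k\otimes Q^i_k$ with $Q^i_k\in\mathcal{A}_i$ (and the $P^i_k$ are supported outside $\mathcal{H}_{int}$), $H_i$ preserves each summand and acts non-trivially on the intersection only on the $i$-factor; the analogous statement holds for $H_j$.

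The strict dimension inequality follows from the non-degeneracy assumption that both $H_i$ and $H_j$ act non-trivially on at least one qudit of $C$ (the only case in which the intersection is meaningful): this forces both $\mathcal{A}_i$ and $\mathcal{A}_j$ to be non-scalar, so $\dim\mathcal{K}^\alpha_1\ge 2$ and $\dim\mathcal{L}^{\alpha,\beta}_1\ge 2$ in every block, giving $\dim\mathcal{H}^{\gamma,i}_{int},\dim\mathcal{H}^{\gamma,j}_{int}<\dim\mathcal{H}_{int}$. The main technical obstacle is the second application of the structure theorem: one has to verify that the refinement carried out inside each $\mathcal{K}^\alpha_2$ packages into a \emph{two-factor} tensor product per block rather than three. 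This works because $\mathcal{A}_i$ already acts as identity on all of $\mathcal{K}^\alpha_2$, so the residual "spectator" factor $\mathcal{L}^{\alpha,\beta}_2$ can be freely absorbed into the $j$-factor without disturbing the $\mathcal{A}_i$-action, yielding the promised decomposition.
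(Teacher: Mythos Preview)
The paper does not prove this lemma; it is quoted from \cite{Bra} without proof, and only the corollary following it is argued in the paper. Your outline is in fact the Bravyi--Vyalyi argument: the operator-Schmidt decomposition across the cut, the elementwise commutation of the induced algebras $\mathcal{A}_i,\mathcal{A}_j$ on $\mathcal{H}_{int}$, and the structure theorem for finite-dimensional $*$-algebras are exactly the ingredients used in \cite{Bra}. So at the level of strategy your proposal is correct and matches the intended proof.

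There is, however, a genuine slip in your dimension argument. From ``$\mathcal{A}_i$ is non-scalar'' you conclude ``$\dim\mathcal{K}^\alpha_1\ge 2$ in every block''; this is false. Take $\mathcal{A}_i$ to be the algebra of diagonal matrices on $\mathcal{H}_{int}$: it is non-scalar, yet every block has $\dim\mathcal{K}^\alpha_1=1$. The same objection applies to your claim about $\dim\mathcal{L}^{\alpha,\beta}_1$. The correct way to get the strict inequalities is as follows. One has $\dim\mathcal{H}^{\gamma,i}_{int}=\dim\mathcal{K}^\alpha_1$, and $\dim\mathcal{K}^\alpha_1=\dim\mathcal{H}_{int}$ can happen only if there is a single block $\alpha$ with $\dim\mathcal{K}^\alpha_2=1$, i.e.\ $\mathcal{A}_i=B(\mathcal{H}_{int})$; but then $\mathcal{A}_j\subseteq\mathcal{A}_i'=\mathbb{C}I$, contradicting that $H_j$ acts non-trivially on the intersection. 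Symmetrically, $\dim\mathcal{H}^{\gamma,j}_{int}\le\dim\mathcal{K}^\alpha_2$, and $\dim\mathcal{K}^\alpha_2=\dim\mathcal{H}_{int}$ forces a single block with $\dim\mathcal{K}^\alpha_1=1$, i.e.\ $\mathcal{A}_i=\mathbb{C}I$, contradicting that $H_i$ acts non-trivially on the intersection. So the strict bounds do hold, but for the reason that neither algebra can equal $\mathbb{C}I$ nor $B(\mathcal{H}_{int})$, not because every simple block is at least two-dimensional. With this correction your argument goes through.
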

\noindent
Thus, under an isometry on ${\cal H}_{int}$ and a restriction to 
one of the ${\cal H}_{int}^\alpha$'s, the terms act on disjoint systems. 
See Figure \ref{fig:ex2}.  

\begin{figure}[ht]
\center{
 \epsfxsize=2.5in
 \epsfbox{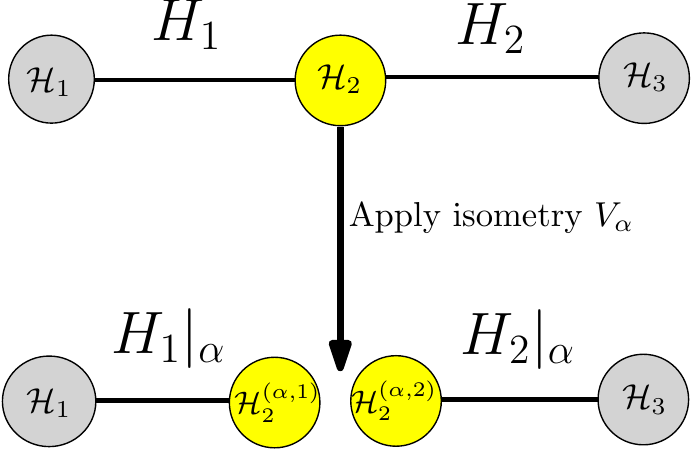}}
 \caption{\label{fig:ex2} 
An example of lemma (\ref{lem:bvsum}): a pair of $2$-local Hamiltonians, are restricted to a subspace $\alpha$ on their intersection.  Inside
this subspace, they act on separate subsystems.} 
\end{figure}
\noindent

\noindent
In this paper we apply 
a simple corollary of Lemma (\ref{lem:bvsum}):  
\begin{corollary}\label{cor:BV}
Let $H$ be a $\CLH(k,d)$ instance, and let $H_0$ be a 
local term such that for any 
$H_j\in H \backslash \left\{H_0\right\}$ that intersects
$H_0$, the intersection is at most on one qudit.
Then for each qudit $q$  participating in $H_0$ there exists a 
direct-sum decomposition of its Hilbert space ${\cal H}_q = \bigoplus_{\alpha_q}{\cal H}_q^{\alpha_q}$, such that each of these subspaces is
preserved by all local terms of $H$, and 
such that the restriction of all terms in $H$ 
to any subspace ${\cal H}_q^{\alpha_q}$ results in $H_0, H\backslash H_0$ acting on a disjoint subsystems, respectively, 
${\cal H}_q^{\alpha_q,0} \otimes {\cal H}_q^{\alpha_q,1}$.
In particular, for any $\alpha_q$, we have that 
$dim({\cal H}_q^{\alpha_q,0}),dim({\cal H}_q^{\alpha_q,1})$
are strictly less than $dim({\cal H}_q)$.
\end{corollary}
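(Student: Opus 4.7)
The plan is to apply the Bravyi--Vyalyi structural lemma simultaneously to the pair consisting of $H_0$ and the full ``rest'' algebra acting on a given qudit $q$ of $H_0$. Fix such a qudit $q$, and let $J_q$ denote the set of indices $j$ such that $H_j$ acts on $q$ (by hypothesis, each such $H_j$ shares only the qudit $q$ with $H_0$). Write $H_0 = \sum_k A_k \otimes B_k$ and $H_j = \sum_l C_l^{(j)} \otimes D_l^{(j)}$, where $A_k, C_l^{(j)}$ are operators on $\mathcal{H}_q$ and $B_k, D_l^{(j)}$ act on the remaining qudits of $H_0$ and $H_j$ respectively. Since the ``other'' qudits of $H_0$ are disjoint from the ``other'' qudits of every $H_j$ (with $j\in J_q$), we may choose these decompositions so that the $B_k$'s and each family $\{D_l^{(j)}\}_l$ are linearly independent on disjoint registers.

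Next, I would exploit commutativity to move the problem onto $\mathcal{H}_q$ alone. Expanding $[H_0, H_j]=0$ and using the linear independence on the disjoint outside registers forces $[A_k, C_l^{(j)}]=0$ for all $k,l,j$. Let $\calA_0 \subseteq B(\mathcal{H}_q)$ be the $C^*$-algebra generated by $\{A_k\}_k$ and let $\calA_{\mathrm{rest}} \subseteq B(\mathcal{H}_q)$ be the $C^*$-algebra generated by $\bigcup_{j \in J_q} \{C_l^{(j)}\}_l$. The pairwise commutativity of generators upgrades (via taking algebraic closures and adjoints) to full commutation of the two algebras, i.e.\ $\calA_{\mathrm{rest}} \subseteq \calA_0'$.

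Now I would invoke the standard structure theorem for finite-dimensional $C^*$-algebras (the same one underlying Lemma~\ref{lem:bvsum}): any unital $*$-subalgebra $\calA_0 \subseteq B(\mathcal{H}_q)$ induces a canonical direct-sum decomposition
\begin{equation}
\mathcal{H}_q \;=\; \bigoplus_{\alpha_q} \mathcal{H}_q^{\alpha_q,0} \otimes \mathcal{H}_q^{\alpha_q,1},
\end{equation}
in which $\calA_0$ acts as $X\otimes I$ on the $\alpha_q$-summand, and its commutant $\calA_0'$ acts as $I\otimes Y$. Because $\calA_{\mathrm{rest}} \subseteq \calA_0'$, every $C_l^{(j)}$ acts only on the second tensor factor, and every $A_k$ only on the first. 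Consequently, each term $H_0$ and each $H_j$ preserves the decomposition and, restricted to any $\alpha_q$, acts as promised on disjoint subsystems. Terms of $H$ that do not act on $q$ preserve the decomposition trivially.

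Finally, I would verify the strict dimension bound: since $H_0$ acts non-trivially on $q$, the algebra $\calA_0$ contains a non-scalar operator, so $\dim \mathcal{H}_q^{\alpha_q,1} < \dim \mathcal{H}_q$ in every block; symmetrically, if some $H_j$ with $j\in J_q$ acts non-trivially on $q$ then $\calA_{\mathrm{rest}}$ contains a non-scalar operator and $\dim \mathcal{H}_q^{\alpha_q,0} < \dim \mathcal{H}_q$ (if $J_q=\emptyset$, the second factor is one-dimensional and the statement about the ``rest'' is vacuous). The main conceptual step is the passage from Lemma~\ref{lem:bvsum}, which handles one pair at a time, to a single joint decomposition accommodating all of the $H_j$'s at once; the subtlety is ensuring that multiple applications do not fight each other, and this is precisely what the commutation $\calA_{\mathrm{rest}} \subseteq \calA_0'$ together with the $C^*$-structure theorem buys us.
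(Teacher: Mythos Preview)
Your argument is correct and follows essentially the same route as the paper's proof. The paper is terser: for each qudit $q$ of $H_0$ it sets $H_1 := \sum_{j\ne 0:\, q\in\mathrm{supp}(H_j)} H_j$ and invokes Lemma~\ref{lem:bvsum} directly on the commuting pair $(H_0,H_1)$ with intersection ${\cal H}_q$, whereas you unpack that lemma and work explicitly with the algebras $\calA_0,\calA_{\mathrm{rest}}$ and the finite-dimensional $C^*$-structure theorem; the underlying mechanism is identical.
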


\begin{proof}
For each particle $q$ examined by $H_0$, 
let the term $H_1$ be the sum of all local terms acting on $q$ 
except $H_0$. The lemma can now we applied on $q$. 
Since each term other than $H_0$ intersects $H_0$ at only one location, 
we can apply the lemma simultaneously for all qudits in $H_0$. 
The result follows. 
\end{proof}

Note, that the above corollary does not require that 
any two local terms acting on a qudit in $H_0$, 
intersect only on that qudit. 

\subsection{Notation}
$d$ will denote the dimensionality of particles, capital $D$'s will denote  
degrees in a graph, in particular $D_v$ is the degree of a node $v$. 
Given $S\subseteq R(L)$ in a bi-partite graph,   
$\Gamma(S)$ denotes the neighbor set of $S$ in $L(R)$. 
$\epsilon$ 
will be used to denote the expansion error for bi-partite 
graphs (as in Definition \ref{def:expbi}). 
$\gamma$ will be used to denote the promise gap, or alternatively 
the approximation error we are allowed;  
often this means the fraction of terms we are allowed to throw away
given a $\CLH$ instance. 

\section{Approximate $\CLH$}\label{sec:approxsec}

\subsection{Geometric Preliminaries} 
Our proof makes use of a very simple property of locally-expanding graphs, 
namely that most of the neighboring Hamiltonian 
terms of a set of qudits of size $k$ 
only touch this set at one point.  
We first state a simple fact quantifying this statement  
(similar observations were used also in \cite{CRVW})
and then explain how it is used in the proof. 

\begin{fact} \label{fact:essence}
Consider $S\subseteq L$  in a bi-partite graph $G(L,R:E)$ 
and let $S$ be $\epsilon$-expanding, for $\eps<\frac{1}{2}$.
Then a fraction at most $2\eps$ of all
vertices of $\Gamma(S)$ have degree strictly larger than $1$ in $S$.
\end{fact}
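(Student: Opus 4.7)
The plan is a straightforward double-counting of edges between $S$ and $\Gamma(S)$. Write $T = \sum_{v \in S} D_v$, which is exactly the number of edges leaving $S$ (every such edge lands in $\Gamma(S)$). Partition $\Gamma(S)$ into the set $A$ of right-vertices whose degree into $S$ is exactly $1$, and the set $B$ of right-vertices whose degree into $S$ is at least $2$; note $|A| + |B| = |\Gamma(S)|$ and we want to bound $|B|/|\Gamma(S)|$.

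First I would count edges: each vertex in $A$ contributes exactly $1$ edge, each vertex in $B$ contributes at least $2$ edges, so $|A| + 2|B| \le T$. Rearranging gives $|B| \le T - (|A| + |B|) = T - |\Gamma(S)|$. Plugging in the $\epsilon$-expansion hypothesis $|\Gamma(S)| \ge (1-\epsilon) T$ yields $|B| \le \epsilon T$.

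Second, I would convert this absolute bound into a fractional bound relative to $|\Gamma(S)|$. Using $|\Gamma(S)| \ge (1-\epsilon) T$ again,
\[
\frac{|B|}{|\Gamma(S)|} \le \frac{\epsilon T}{(1-\epsilon) T} = \frac{\epsilon}{1-\epsilon},
\]
and for $\epsilon < 1/2$ we have $1-\epsilon > 1/2$, so $\epsilon/(1-\epsilon) \le 2\epsilon$, giving the claimed bound.

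There is no real obstacle here; the statement is essentially an accounting identity, and the only subtle point is making sure to use the expansion hypothesis twice, once to bound $|B|$ in absolute terms and once to ensure the denominator $|\Gamma(S)|$ is not much smaller than $T$. The assumption $\epsilon < 1/2$ is exactly what makes the final step $\epsilon/(1-\epsilon) \le 2\epsilon$ valid.
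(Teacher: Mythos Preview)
Your proof is correct and follows essentially the same double-counting strategy as the paper: both arguments arrive at the bound $|B|/|\Gamma(S)| \le \eps/(1-\eps)$ and then use $\eps<1/2$ to conclude. Your version is in fact a bit more direct, since the paper introduces an auxiliary parameter $m$ (the average $S$-degree within $B$) and then minimizes over it at $m=2$, whereas you simply use $|A|+2|B|\le T$ from the start.
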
 

\begin{proof}
Put $D(S) = \sum_{v\in S} D_v$.
The average degree of a vertex in $\Gamma(S)$ w.r.t. $|S|$
is at most $\frac{D(S)}  {D(S) (1-\eps)} = \frac{1}{1-\eps}$.
Let $\alpha_1$ denote the fraction $|\Gamma_1(S)| / |\Gamma(S)|$, 
where $\Gamma_1(S)$ is the set of neighbors of $S$ with degree exactly 
$1$ with respect to $S$. 
Then 
$$\frac{1}{1-\eps} \ge \alpha_1 1 + (1-\alpha_1) m,$$
where $m$ is the average degree of a vertex with at 
least two neighbors in $S$.
Then by simple algebra

$$\alpha_1(m) \ge 
1 - \frac{1}{m-1}\cdot\frac{\eps}{1-\eps},   
$$
so $\alpha_1(m)$ is a monotonously increasing function of $m$,
and since $m\geq 2$, then $\alpha_1$ is minimized for $m=2$.
Hence, 
$$ \alpha_1 \ge  1-\frac{\eps}{1-\eps}. $$
and since $\eps<1/2$ we have:
$$ \alpha_1 \geq 1 - \eps(1+2\eps) \geq 1-2\eps.$$
\end{proof}

\subsubsection{Overview of the proof of Theorem \ref{thm:approxBPinNP}}
The idea of the proof is that if a constraint $H_i$ satisfies a simple 
condition, namely, that it shares at most  
one particle with any other constraint, then Corollary (\ref{cor:BV}) 
can be applied (with $H_i$ playing the role of $H_0$) to ``disentangle'' 
the action of $H_i$ from the rest of the terms   
(from there we can proceed to other terms iteratively).  
The problem is that $H_i$ does not necessarily satisfy this condition. 
The idea is that because the graph is a good locally-expanding graph, we 
can remove only a small number of terms to achieve the condition.  
More formally, we define:  

\begin{definition}\label{def:isolate}
\textbf{Isolated constraints and particles}
Let $(L,R;E)$ be a bi-partite  graph.
A constraint $g\in R$ is called
\textbf{isolated}
if for any constraint $v\in R$ except $g$, 
we have $|\Gamma(v) \cap \Gamma(g)| \leq 1$.
We define the \textbf{isolation penalty} of
$g$, to be
the minimal number of constraints in $R$ that we need to remove 
so that $g$ is isolated. 
\end{definition}

We thus can ``isolate'' $H_i$ by removing all terms that share 
at least $2$ vertices with $H_i$.  
This is where we use the local-expansion property:
by Fact \ref{fact:essence}
the set of constraints we need to remove 
constitutes just a small portion of 
the terms intersecting $H_i$.  
The remaining terms that intersect $H_i$, 
intersect it at only one particle. Corollary (\ref{cor:BV}) 
can now be applied almost directly 
on $H_i$.   
The final result is that the term $H_i$ can be separated 
from the remaining terms and can be viewed (up to the relevant 
isometries, and after a restriction to one of the subspaces in each qudit) 
as acting on a separated subsystem. 

To approximate the $\CLH$ problem in 
$\NP$, it is not enough to remove for each term, all terms 
that make it isolated; this will require removing way too many 
terms. Instead, we use the above idea iteratively, and count how many 
terms we need to remove more carefully.
At each iteration $t$ we choose a local term $v_t$ and ``isolate'' it 
(see figure \ref{fig:ex1}).
Corollary (\ref{cor:BV})
can then be applied, 
and $v_t$ can be separated from the rest, after restricting 
each of its qudits to its relevant subspaces. 
Some qudits may interact, following this process, with only a single 
local term of the Hamiltonian, thereby
allowing us to remove them.
We can now iterate this process, picking another local term;  
this way we gradually "tear away" local subspaces of particles. 

\begin{figure}[ht]
\center{
 \epsfxsize=6in
 \epsfbox{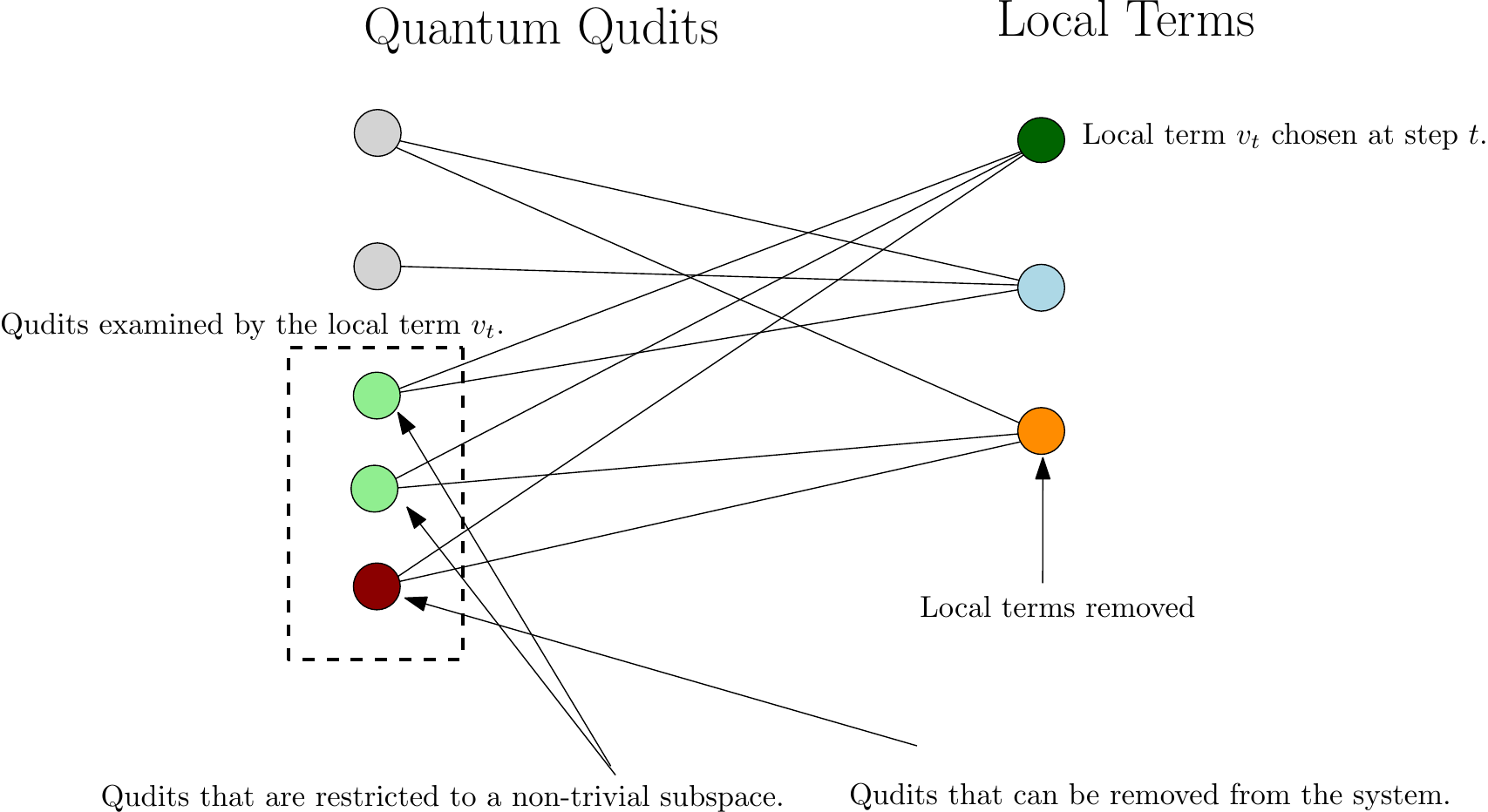}}
 \caption{\label{fig:ex1} 
The isolation procedure w.r.t. $v_t$: we remove the bottom local term since it intersects the neighbor-set of $v_t$ at two locations.
Then we apply lemma (\ref{lem:bvsum}), which removes all degrees of freedom of the bottom particle, while
reducing the dimension of the next two particles by at least $1$.}
\end{figure}

The analysis of the upper bound on the number of terms 
that need to be removed altogether is non-trivial, since 
the number of particles does not necessarily decrease through 
one iteration; it is only their {\it dimensionality} that decreases. 
We resort to an {\it amortized} analysis 
counting the total number of {\it local dimensions} 
(namely, the sum over the particles of the local dimension 
of each particle) that are removed in total. 
To the best of our knowledge this amortized counting of local dimensions 
is novel. 

Note that removing terms here is done not 
in order to {\it disconnect} parts of a graph, but 
only to isolate terms; loosely speaking, to make the interaction pattern
more sparse. This allows a much more economic removal of terms, 
and enables us to achieve the desired result.  

\subsection{Approximating $\CLH$ on local-expanders - 
Proof}\label{sec:appBP}

We are now ready to prove our main theorem:
\begin{proof}(Of Theorem \ref{thm:approxBPinNP})

\noindent
Let $H$ be an instance of $\CLH(k,d)$, and let $(L,R;E)$ be its bi-partite interaction graph. 
We shall find a state whose energy approximates the ground energy of $H$, 
(we note that the same procedure can be applied to the approximation of 
any eigenvalue of $H$).
We perform an iterative process. 
At step $t$, we have a set of remaining terms $R_{rem}(t)$ 
acting on the remaining Hilbert space ${\cal H}_{rem}(t)$. 
We also have a set $R_{bad}(t)$ which are terms we have collected 
so far that we want to throw away. 
We initialize $R_{rem}(1)=R$, i.e. the set of all local terms of $H$, and $R_{bad}(1)$ to be the empty set.
We also initialize ${\cal H}_{rem}(1) = {\cal H}$.
Repeat the following: 
\begin{enumerate} 
\item \textbf{isolate}
Pick an arbitrary local term $v$ in $R_{rem}(t)$.   
and isolate it (definition \ref{def:isolate})
by removing as few as possible terms from $R_{rem}(t)$,
placing them in $R_{bad}(t)$. 
\item \textbf{isometries}
Now that $v$ is isolated, the conditions of Corollary \ref{cor:BV} hold:
there is a tensor product of $1$-local unitaries on 
the qudits of $v$: $W_v = \bigotimes_i W_i$,
such that applying $W_v$ allows to restrict 
the Hilbert space to some 
strict 
subspace containing the zero eigenspace of $R_{rem}(t)$.
Moreover, inside this subspace, the term $v$ acts on a disjoint subsystem
than the rest of $R_{rem}(t)$.
We conjugate $v$, and each term of $R_{rem}(t)$ by $W_v$.
We then {\it prune}: remove any 
triviality we encounter, be it removing qudits from Hamiltonian terms  
or entire terms altogether.
After pruning, each local term acts non-trivially on all the qudits that are attached to it.
\item \textbf{update}
We set $R_{rem}(t+1)$ as the set of remaining 
local terms of $R_{rem}(t)$ after this pruning, 
and then update ${\cal H}_{rem}(t+1)$ as the support of $R_{rem}(t+1)$.
We then set $R_{good}(t+1)$ as the union of $R_{good}(t)$ and the term $v(t)$ restricted to the appropriate subspaces
of its qudits, as in corollary \ref{cor:BV}.  
\end{enumerate} 

We terminate when there is no longer any intersection between two different 
terms of $R_{rem}(t)$.
Clearly, this ends after at most polynomially many 
iterations, since the number of terms in $R_{rem}(t)$ decreases by at least $1$ 
each iteration. 
Let the number of iterations be $T$. 
We claim that a state whose energy is within $|R_{bad}(T)|$  
from the ground energy of the original Hamiltonian,  
can be recovered from this 
procedure by finding the ground state of all terms in $R_{good}(T)$, 
and applying the inverse of the isometries applied along the way. 
This is true as long as the subspaces in the direct sum that are chosen at step $t$, contain 
the groundspace of the current Hamiltonian $R_{rem}(t)$; 
the $\NP$ prover can provide the indices of the subspaces so that 
this holds at each step,
and the verification procedure is simply to check that 
those subspaces contain a non-zero kernel.  

To see that this implies 
a constant-depth quantum circuit that generates a groundstate, 
observe that all isometries on a given qudit $q$, produced
along the way as a result of isolating all local terms from $R_{good}(T)\cup R_{rem}(T)$ on $q$,
commute by definition, and so we can apply them 
in any order, and in particular, simultaneously.
Their application causes the interaction graph to break into linearly many 
connected components of size at most $k$ each, and so each
component can be trivially diagonalized.
The composition of these local diagonalizing unitaries, and the
tensor-product isometries results in a circuit of depth $2$ 
(just as in
\cite{Bra}).


\paragraph{Bounding the approximation error}
We now provide an 
upper bound on the size of the error $|R_{bad}(T)|$. 
For a given $q\in L$, let $D_q$ denote its degree in $G(1)$.
Consider the $t$-th "isolation" step, and denote $v_t$ as the constraint isolated at step $t$,
$S_t = \Gamma(v_t)$,
and $G(t)$ as the bi-partite interaction graph of $L_{rem}(t)$.
We define
for each $q\in S_t$, the quantity $p(t,q)$ which is the number of check terms
removed to $R_{bad}$ during the $t$-th step,  multiplied by the ratio
$\frac{D_q}{\sum_{q\in S_t} D_q}$, where those degrees are taken from $G(1)$.
This accounts for the "relative penalty" shared by $q$, w.r.t. $G(1)$ - 
note that for any iteration $t$, $\sum_{q\in S_t} p(t,q)$ 
is exactly the number of check terms removed at the $t$-th iteration.
Hence 
\begin{equation}\label{eq:totalpenalty}
|R_{bad}(T)|=\sum_{t=1}^T \sum_{q\in S_t} p(t,q). 
\end{equation}
We claim:
\begin{fact}\label{fact:penalty}
For any $1\le t\le T$, and any $q\in S_t$, we have 
$p(t,q)\leq 2\eps D_q$.
\end{fact}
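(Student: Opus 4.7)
The plan is to reduce the statement to a direct application of Fact \ref{fact:essence} applied to the set $S_t$ in the \emph{original} bi-partite graph $G(1)$. Unwinding the definition, the claim $p(t,q) \le 2\epsilon D_q$ is equivalent to showing that the number of terms dumped into $R_{bad}$ at step $t$ is at most $2\epsilon \sum_{q' \in S_t} D_{q'}$, where all degrees are in $G(1)$.

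First I would observe that $S_t = \Gamma(v_t)$ has size at most $k$ (since the right degree of the interaction graph is at most $k$), so by the hypothesis that $G(1)$ is $\epsilon$-locally-expanding, the set $S_t$ is $\epsilon$-expanding in $G(1)$. Fact \ref{fact:essence} then tells me that at most a $2\epsilon$ fraction of the vertices in $\Gamma_{G(1)}(S_t)$ have degree strictly greater than $1$ into $S_t$. In particular, the number of such ``bad'' constraints is at most $2\epsilon |\Gamma_{G(1)}(S_t)|$, and since each neighbor of $S_t$ in $G(1)$ is accounted for in the degree sum, $|\Gamma_{G(1)}(S_t)| \le \sum_{q' \in S_t} D_{q'}$.

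Next I would argue that the constraints removed at step $t$ to isolate $v_t$ are a subset of these ``high-degree-in-$S_t$'' neighbors (in $G(1)$). Indeed, to isolate $v_t$ in $R_{rem}(t)$ according to Definition \ref{def:isolate}, one only needs to remove those $w \in R_{rem}(t) \setminus \{v_t\}$ with $|\Gamma(w) \cap S_t| \ge 2$; and since the edges of $G(t)$ are inherited from $G(1)$ (possibly after pruning trivial actions, which can only decrease intersection sizes), any such $w$ was already a degree-$\ge 2$ neighbor of $S_t$ in $G(1)$. Therefore the number of terms removed at step $t$ is bounded by $2\epsilon \sum_{q' \in S_t} D_{q'}$.

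Multiplying by the share $\frac{D_q}{\sum_{q' \in S_t} D_{q'}}$ yields $p(t,q) \le 2\epsilon D_q$, as desired. The only step that deserves care is the inheritance argument: I need to verify that the pruning operation in the main algorithm does not turn a previously low-intersection constraint into one with higher intersection with $S_t$, which is immediate since pruning only removes qudits from the support of constraints, never adds them. Everything else is straightforward accounting.
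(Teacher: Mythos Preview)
Your argument is correct and follows essentially the same route as the paper's proof: apply Fact~\ref{fact:essence} to $S_t$ in $G(1)$ to bound the number of high-degree neighbors by $2\eps|\Gamma_{G(1)}(S_t)|\le 2\eps\sum_{q'\in S_t}D_{q'}$, then observe that passing from $G(1)$ to $G(t)$ can only remove edges and vertices, so the isolation penalty cannot increase. Your write-up is slightly more explicit than the paper's (you spell out $|S_t|\le k$, the trivial bound $|\Gamma_{G(1)}(S_t)|\le\sum D_{q'}$, and the monotonicity of intersection sizes under pruning), but the structure and key ideas are identical.
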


\begin{proof}
We know that in $G(1)$ we had
$|\Gamma(S_t)|\ge (1-\epsilon) \sum_{q\in S_t} D_q$, 
and so by Fact \ref{fact:essence}
in order to isolate $S_t$, it suffices to remove
$2\epsilon \sum_{q\in S_t} D_q$ terms. 
Since $G(t)$ is derived from $G(1)$ by removing vertices and edges, 
this bound still holds -- the isolation penalty cannot increase 
by removing elements from the graph.  
So, at each step, when isolating a term acting on a set of 
particles $S_t$, we remove at 
most  $2\epsilon \sum_{q\in S_t} D_q$ constraints. 
By definition, $p(t,q) \leq 2 \eps D_q$.
\end{proof}

On the other hand, let us calculate how many qudit dimensions we remove 
at such an iteration, out of the total number of qudit dimensions (namely, 
the sum of dimensions over the particles).\footnote{
We note that when we reduce the local dimension of a $d$-level qudit by $1$, 
the dimension of the total Hilbert-space is reduced by a factor of $(d-1)/d$; 
here we are interested however not in the standard dimension of the entire 
Hilbert space but in the {\it sum} of local dimensions over all particles.}  
For each qudit $q\in S_t$, $q$ is either
removed altogether, and thus its local dimension 
is reduced by $dim({\cal H}_q)$,
or it is restricted to some non-trivial
subspace 
and maybe further divided to two sub-particles, one of which is removed
from ${\cal H}_{rem}(t)$; 
in any of these cases, the local
dimension decreases by at least $1$. 
Thus, the total number of local dimensions removed by one application
of "isolate'' is at least $|S_t|$.

We observe by the above
that every qudit $q\in L$ participates in at most $d$ iterations.
We would like to sum its relative penalties $p(t,q)$  
over those iterations.  
By Fact (\ref{fact:penalty}) 
this sum is at most $d \cdot max_t p(t,q) \leq d \cdot 2 \eps D_q$.
We would now like to sum over all $q\in L$; by Equation \ref{eq:totalpenalty}
this will upper bound $|R_{bad}(T)|$. 
We have 
\begin{equation}
|R_{bad}|
\leq
\sum_{q\in L} 2 \eps d D_q = 2\eps d \sum_{q\in L} D_q \leq 2 \eps d k|R|.
\end{equation}
Therefore, the relative penalty satisfies 
$$
\frac{|R_{bad}(T)|}{|R|} \leq 2kd \eps,
$$
as required.
\end{proof} 

\subsection{NP-hardness of approximation}
\label{sec:NPhard}
The proof of Theorem \ref{thm:NPhard} 
relies on the fact that the output 
of the $\PCP$ reductions in Dinur's proof \cite{Din} 
can be made into an excellent locally-expanding graph 
without harming the promise gap, by a small modification of 
her construction.  
We provide here only a sketch of the proof, assuming the reader is 
familiar with \cite{Din}.  Providing a self contained proof
will require essentially repeating Dinur's $\PCP$ proof. 

\begin{proof}(Of Theorem \ref{thm:NPhard})
Let us denote by $\CSP(3,2)$   
$3$-local $\CSP$s acting on bits. Note that instances of 
$\CSP(3,2)$ are special cases 
of $\CLH(k,d)$ for $k\ge 3$ and $d\ge 2$. 
Suppose that for given parameters $k,d$ we are also given $\epsilon$, for which
$\gamma(\epsilon,k,d)\le C_0$. 
The theorem will follow for these parameters $k,d,\eps$ if we show that
it is even $\NP$-hard 
to approximate $\CSP(3,2)$ 
whose bipartite interaction graphs 
are $\epsilon$-locally-expanding , to within $C_0$. 


To do this, recall that 
Dinur provides in \cite{Din} a $\PCP$ reduction which maps 
any $2$-local $\CSP$ instance on $n$ variables of alphabet $\Sigma_0$,
$|\Sigma_0| = O(1)$, 
to a $\CSP$ instance of $poly(n)$ size, 
such that the new instance has constant promise gap, say, $1/10$.  
We modify her construction so that the output $\CSP$ is $3$-local, acting on 
bits, has local expansion error $\epsilon$, and 
the promise gap is larger than $C_0$ where $C_0$ is $1/10$ divided 
by the loss in the promise gap introduced by the alphabet reduction step 
in Dinur's construction; this factor is bounded by a universal constant 
independent of the parameters of the problem \cite{Din}.  

To achieve this we make a small modification of 
Dinur's gap amplification procedure. 
Recall that this procedure consists of iterating three steps: 
the {\it preprocessing} step in which the graph is made into a
$d$-regular expander graph,
{\it gap amplification} in which the $\CSP$ is translated into
another $\CSP$, whose new $2$-local 
constraints are defined using conjunctions of 
constraints in the original $\CSP$ along collections of $t$-walks 
(walks of length $t$), 
thereby significantly amplifying the promise gap of the instance, but
increasing the size of the alphabet, $\ell$, to be doubly exponential in $t$; 
and finally {\it  the alphabet reduction} in which a $2$-CSP
instance alphabet $\Sigma$, is translated
into a $3$-local on bits, with 
some small constant factor 
loss in the promise gap. 
We denote $1/10$ divided by this loss to be $C_0$.    
As a final sub-step, the output $3$-local $\CSP$ is translated
into a $2$-local $\CSP$ on alphabet of size $8$. 
By every such $3$-step iteration, the gap is amplified by some constant larger 
than $1$. 

We apply Dinur's procedure, until the output is a $(3,2)-\CSP$ 
with promise gap at least $1/10$. 
Now we add one additional iteration, in which we 
make two modifications. 
First, we choose the parameter $t$, defined to be
the length of walks in the gap amplification step, in such a way that 
$\ell$, the size of the new alphabet, which is doubly exponential in $t$, 
satisfies $1/L <x\epsilon$ where $L = 2^{2^{2\ell}}$, and $x$ 
is some combinatorial constant to be determined below. 
Second, we skip the very final step where
$3$-local constraints are translated into $2$-local
constraints. 
Finally, regardless of how the alphabet reduction is defined 
in previous iterations (there are several versions of Dinur's proof) apply 
at the final iteration alphabet reduction using Dinur's variant \cite{Din}
of the long code test (\cite{Hastad}). 
We claim that the resulting $\CSP$ is a $3$-local $\CSP$ 
whose underlying bi-partite graph 
has $\epsilon$ local expansion, and whose 
promise gap is at least $C_0$. 

The fact that the promise gap is at least $C_0$ follows from the fact 
that we have applied Dinur's iteration to achieve promise gap at least 
$1/10$, and then the alphabet reduction reduces it to at least $C_0$ 
regardless of the alphabet size (this is how we have defined $C_0$). 
Let us now show why the local expansion error is at most 
$\epsilon$.  
Let $S$ be some set of $3$ bits.
The local expansion error of $S$ can be bounded 
from above by $(1-\alpha_1)/2$, where $\alpha_1=|\Gamma_1(S)|/|\Gamma(S)|$ 
as in the proof of Fact (\ref{fact:essence}). 
Hence, it is sufficient to lower bound 
$\alpha_1$ and show it is at least $1-2\epsilon$.  

We consider in more detail the long-code construction used in the 
alphabet reduction. This step associates with each $2$-local constraint 
acting on two large alphabet variables (each is of dimension $\ell$) with 
$L = 2^{2^{2\ell}}$ bits, 
which are supposed to be the long-code encoding 
of the assignments to those two large alphabet variables. 
Those bits are called $LC$ bits; for each such $L$-tuples of bits, 
the long-code constraints (called $LC$ constraints) are 
$3$-local constraints, one for each triplet among those $L$ bits.  
In addition, we add for each $\ell$-size variable $\ell$ 
bits that 
are responsible for the  {\it Consistency} of this variable in 
all constraints it appears in. These $\ell$ 
bits are thus called $C$-type; 
the consistency constraints ($C$-constraints), 
act on one $C$-type bit, and two $LC$-bits from the long-code 
encoding of a given constraint. We note that choosing any two bits of the 
$C$ constraint determines the third bit. 

We first observe, that the only interesting $S$'s are those that 
have at least $2$ $LC$-type
bits in some long-code encoding of the {\it same} constraint.
This is because in all the other cases, either no constraint can 
intersect $S$ in more than one bit (this is the case when 
$S$ contains no $LC$ bits, 
or it contains $3$ $LC$ bits, all from the long-code encoding 
of different constraints )
or the number of constraints that can intersect $S$ in two bits 
is at most $2$ (this is the case if $S$ contains one $LC$ bit only, 
or if $S$ contains $2$ $LC$ bits, 
but from {\it different} long-code encodings). 
Since all other constraints intersect $S$ in exactly one bit, 
$\alpha_1$ is close enough to $1$. 

So we are left with the case that
$S$ has at least $2$ bits in the same long-code encoding. 
The fraction of $LC$ constraints that intersect $S$ in $2$ 
places, out of all possible $LC$ constraints that intersect it, 
is at most $O(1/L)$;  
The number of $C$ constraints that intersect $S$ in $2$ places, 
is at most $3$ since every two bits determine the third in a $C$ 
constraint. Thus, the fraction of $C$ constraints that intersect $S$ 
in two bits, out of all $C$ constraints that intersect $S$, 
is $O(1/L)$. 
Thus, the expansion error in this case is at most 
$O(1/L)$, which, by choosing $x$ correctly in the choice of $t$ 
above, is $< \eps$.
\end{proof}

\section{Acknowledgements}
The authors would like to thank Irit Dinur and Gil Kalai 
for insightful discussions and Prasad Raghavendra, Luca Trevisan  and Umesh Vazirani for
helpful comments.  
The research leading to these results has received funding from the European
Research Council under the European Union's Seventh Framework Programme
(FP7/2007-2013) / ERC grant agreement n° 280157,
and from
ISF Grant no.

\end{document}